\begin{document}

\def \d {{\rm d}}

\def \bF {\mbox{\boldmath{$F$}}}
\def \bV {\mbox{\boldmath{$V$}}}
\def \bff {\mbox{\boldmath{$f$}}}
\def \bT {\mbox{\boldmath{$T$}}}
\def \bk {\mbox{\boldmath{$k$}}}
\def \bl {\mbox{\boldmath{$\ell$}}}
\def \bn {\mbox{\boldmath{$n$}}}
\def \bbm {\mbox{\boldmath{$m$}}}
\def \tbbm {\mbox{\boldmath{$\bar m$}}}

\def \T {\bigtriangleup}
\newcommand{\msub}[2]{m^{(#1)}_{#2}}
\newcommand{\msup}[2]{m_{(#1)}^{#2}}

\newcommand{\be}{\begin{equation}}
\newcommand{\ee}{\end{equation}}

\newcommand{\beq}{\begin{eqnarray}}
\newcommand{\eeq}{\end{eqnarray}}
\newcommand{\pa}{\partial}
\newcommand{\pp}{{\it pp\,}-}
\newcommand{\ba}{\begin{array}}
\newcommand{\ea}{\end{array}}

\newcommand{\M}[3] {{\stackrel{#1}{M}}_{{#2}{#3}}}
\newcommand{\m}[3] {{\stackrel{\hspace{.3cm}#1}{m}}_{\!{#2}{#3}}\,}

\newcommand{\tr}{\textcolor{red}}
\newcommand{\tb}{\textcolor{blue}}
\newcommand{\tg}{\textcolor{green}}

\newcommand*\bg{\ensuremath{\boldsymbol{g}}}
\newcommand*\bE{\ensuremath{\boldsymbol{E}}}
\newcommand*\bh{\ensuremath{\boldsymbol{h}}}
\newcommand*\bR{\ensuremath{\boldsymbol{R}}}

\def\a{\alpha}
\def\b{\beta}
\def\g{\gamma}
\def\de{\delta}

\def\E{{\cal E}}
\def\B{{\cal B}}
\def\R{{\cal R}}
\def\F{{\cal F}}
\def\L{{\cal L}}

\def\e{e}
\def\bb{b}

\newtheorem{theorem}{Theorem}[section] 
\newtheorem{cor}[theorem]{Corollary} 
\newtheorem{lemma}[theorem]{Lemma} 
\newtheorem{prop}[theorem]{Proposition}
\newtheorem{definition}[theorem]{Definition}
\newtheorem{remark}[theorem]{Remark}  
\newtheorem{proposition}[theorem]{Proposition}

\title{Universal Black Holes}

\author[1]{Sigbj\o rn Hervik\thanks{sigbjorn.hervik@uis.no}}
\author[2]{Marcello Ortaggio\thanks{ortaggio(at)math(dot)cas(dot)cz}}

\affil[1]{Faculty of Science and Technology, University of Stavanger, N-4036 Stavanger, Norway}
\affil[2]{Institute of Mathematics of the Czech Academy of Sciences, \newline \v Zitn\' a 25, 115 67 Prague 1, Czech Republic}

\maketitle

\abstract{We prove that a generalized Schwarzschild-like ansatz can be consistently employed to construct $d$-dimensional static vacuum black hole solutions in any metric theory of gravity for which the Lagrangian is a scalar invariant constructed from the Riemann tensor and its covariant derivatives of arbitrary order. Namely, we show that, apart from containing two arbitrary functions $a(r)$ and $f(r)$ (essentially, the $g_{tt}$ and $g_{rr}$ components), in any such theory the line-element may admit as a base space {\em any} isotropy-irreducible homogeneous space. Technically, this ensures that the field equations generically reduce to two ODEs for $a(r)$ and $f(r)$, and dramatically enlarges the space of black hole solutions and permitted horizon geometries for the considered theories. We then exemplify our results in concrete contexts by constructing solutions in particular theories such as Gauss-Bonnet, quadratic, $F(R)$ and $F$(Lovelock) gravity, and certain conformal gravities.}

\vspace{.2cm}
\noindent





\section{Introduction}

\label{intro}

The prototypical static black hole geometry is described by the celebrated Schwarzschild line-element,
\beq
	\bg=-f(r)\d t^2+\frac{\d r^2}{f(r)} + r^2h_{ij}(x^k)\d x^i\d x^j ,
	\label{schw}
\eeq
where $f=1-\frac{\mu}{r}$ and $\bh=h_{ij}\d x^i\d x^j$ is the metric of a $2$-dimensional round unit sphere. It has been known for various decades that this vacuum solution of general relativity readily extends to Einstein's gravity with a cosmological constant in $d=n+2$ spacetime dimensions, provided one takes $f=1-\frac{\mu}{r^{d-3}}-\lambda r^2$ and $\bh$ is the metric of an $n$-dimensional round unit sphere \cite{Tangherlini63}. Even more generally, one can also replace $\bh$ by {\em any} $n$-dimensional Einstein space with Ricci scalar $\tilde R=n(n-1)K$ and define $f=K-\frac{\mu}{r^{d-3}}-\lambda r^2$ \cite{GibWil87} (see also \cite{Birmingham99}), giving rise to a much richer family of static ``Einstein'' black holes when $d>5$ (at the price of giving up asymptotic flatness or (A)dS-ness). The property of $\bh$ being Einstein is not only sufficient but also necessary, so that the extensions obtained in \cite{GibWil87,Birmingham99} in fact exhaust the space of black hole solution of the form~\eqref{schw} in general relativity.\footnote{It should be pointed out that there exist also static vacuum black holes which are {\em not} of the form~\eqref{schw} and whose horizons are not Einstein, already in five dimensions \cite{CadWoo01} (see \cite{Hervik04} in higher dimensions). Also  five-dimensional static black rings \cite{EmpRea02prd} with a $S^1\times S^2$ horizon cannot be written in the form~\eqref{schw} (as follows from \cite{PraPra05} and the comments on the Weyl type given below) -- these, however, contain a conical singularity. Additionally, static black strings are also excluded by this ansatz, as they typically possess one (or more) privileged spatial direction(s) and a Kaluza-Klein-like asymptotics.\label{foot_rings}} The particular choice of $\bh$ may affect the stability of the solution \cite{GibHar02}.

In addition to Einstein's gravity, gravity theories described by higher-order Lagrangians also have a long history \cite{Weyl18,Eddington_book}, and appear naturally in studies of quantum gravity \cite{DeWittbook} and in the low energy limit of string theory \cite{SchSch74}. Numerous solutions of the form \eqref{schw} have been obtained for various theories in diverse dimensions, mostly {\em assuming} $\bh$ to be a round sphere (as required if spherical symmetry is assumed) or a space of zero or negative constant curvature (references relevant to the present paper will be given in due course in the following sections). Nevertheless, as in Einstein's theory, it would be desirable to characterize the full space of such solutions for more general theories, and one may wonder whether the same ansatz \eqref{schw} can be extended to more general transverse geometries $\bh$ for arbitrary (diffeomorphism invariant, metric) theories of gravity, perhaps by simply modifying $f(r)$ appropriately. However, it was pointed out in \cite{DotGle05} that simply adding a Gauss-Bonnet term to the Einstein-Hilbert Lagrangian places a strong tensorial constraint on the geometry of $\bh$, thus ruling out many known ``exotic'' Einstein black holes. This observation was later extended to cubic \cite{FarDeh14} and arbitrary (generic) Lovelock theories \cite{Ray15} (see also \cite{OhaNoz15}). It is thus clear that, generically, $\bh$ cannot be an arbitrary Einstein space in a gravity theory different from Einstein's. It remains an open question whether a theory-independent characterization of permitted horizon geometries of static black holes can be given (which would be a natural starting point for obtaining full horizon characterizations for specific theories). In this paper we present new results in this direction.

As is well known (and reviewed briefly in section~\ref{sec_Eins} below), in Einstein's gravity the condition $g_{tt}g_{rr}=-1$ in \eqref{schw} follows from the field equations, however this is not necessarily the case in other theories (see, e.g., \cite{Buchdahl53} for an early result). Therefore, apart from the (partial) freedom in the choice of $\bh$, a further extension of \eqref{schw} consists in considering a more general ansatz with two undetermined functions of $r$, i.e., 
\beq
	\bg=e^{a(r)}\left(-f(r)\d t^2+\frac{\d r^2}{f(r)}\right) + r^2h_{ij}(x^k)\d x^i\d x^j .
	\label{metric1}
\eeq

It is the main purpose of the present paper to obtain a sufficient condition on the metric $\bh$ which enables the ansatz \eqref{metric1} to be consistently employed in any theory of gravity of the form 
\be
 S=\int\d^dx\sqrt{-g}{\cal L}(\bR,\nabla\bR,\ldots) ,
\label{action}
\ee
where ${\cal L}$ is a scalar invariant constructed polynomially from the Riemann tensor $\bR$ and its covariant derivatives of arbitrary order. By this we mean that, whatever theory \eqref{action} one chooses, with the ansatz \eqref{metric1} the corresponding field equations will generically reduce to two ODEs  (the precise form of which will depend on the choice of a particular theory) for two unknown metric functions $a(r)$ and $f(r)$, leaving one with some remaining arbitrariness on $\bh$. These spacetimes will in general describe static black holes (examples for various theories are provided in sections~\ref{sec_Eins}--\ref{sec_conf}) and we will name them {\em universal black holes}.\footnote{Comments similar to those in footnote~\ref{foot_rings} also apply to the metric~\eqref{metric1} and also beyond general relativity.} However, the details (including the precise form of $a(r)$ and $f(r)$) and physical properties of the solutions will naturally depend on the specific theory one is interested in. Since for $n=2,3$ an $n$-dimensional Einstein space is necessarily of constant curvature, our analysis we will be of interest for dimension $d\ge6$ (i.e., $n\ge4$). However, our results will apply also in lower dimensions unless stated otherwise.

In section~\ref{sec_universal} we describe properties of the general static ansatz~\eqref{metric1} and prove the main result (already mentioned above) in Proposition~\ref{prop_E}. This is then used to argue how the field equations of {\em any} theory~\eqref{metric1} simplify when evaluated on the considered ansatz. Near-horizon geometries of possible extremal solutions are also briefly discussed. 
 In the subsequent sections \ref{sec_Eins}--\ref{sec_conf} we add further comments for some gravity theories of particular interest (namely, Einstein, Gauss-Bonnet, Lovelock and $F$(Lovelock), quadratic gravity, $F(R)$ and certain conformal gravities). A few explicit solutions are also constructed which exemplify the general results of section~\ref{sec_universal} and in some cases extend certain solutions already previously known in the case of a constant curvature $\bh$. A short summary and some concluding comments are provided in the last section~\ref{sec_discuss}.
In appendix~\ref{app_universal} we define Riemannian {\em universal} spaces and relate those to isotropy-irreducible homogeneous spaces and to the results of \cite{Bleecker79}. In appendix~\ref{app_RT} we briefly review the Robinson-Trautman form of spacetimes~\eqref{metric1}, which is useful to highlight geometric properties thereof and may be convenient for certain computations. As an example, we also work out the explicit form of the field equations of quadratic gravity in arbitrary dimension.

\section{Black Holes with universal horizons}

\label{sec_universal}

\subsection{Geometry of the ansatz}

\label{subsec_geom}

Let us assume the spacetime metric is of the form~\eqref{metric1}. The spacetime is static in regions where $f(r)>0$ and belongs to the Robinson-Trautman class \cite{RobTra62} (extended to arbitrary $d$ in \cite{PodOrt06}), cf. appendix~\ref{app_RT}. Ansatz \eqref{metric1} (or \eqref{ansatz_RT}) describes a warped product with a 2-dimensional Lorentzian factor and is therefore of Weyl type D (or O) and purely electric \cite{PraPraOrt07,HerOrtWyl13}. However, eq.~\eqref{Rvv} shows that the Ricci (and thus the Riemann) tensor is not aligned, unless $a=$const -- in which case one can take $a=0$, upon rescaling $f$ and $t$.\footnote{More in detail, the Weyl type is generically D(bd) since here $\bh$ is Einstein \cite{PraPraOrt07,HerOrtWyl13} (cf. also proposition~8.16 of \cite{OrtPraPra13rev}), in which case there exist precisely two mWANDs $\pa_t\pm f\pa_r$ (see footnote~15 of \cite{OrtPodZof15}) and thus the Riemann type is G if $a_{,r}\neq0$ and D (aligned) if $a_{,r}=0$. The spacetime is conformally flat iff $\bh$ is of constant curvature and the functions $f$ and $a$ satisfy a differential equation which can be obtained from (40,\cite{PodSva15}).} This thus provides an alternative geometric interpretation of the ``$g_{tt}g_{rr}=-1$'' condition discussed in \cite{Jacobson07} (also meaning that, in such a case, $r$ is linear in $v$, i.e., it is an affine parameter along $\bl$ \cite{PodOrt06,Jacobson07}).

In the rest of the paper, it will be assumed that, in \eqref{metric1}, the transverse Riemannian metric $\bh=h_{ij}(x^k)\d x^i\d x^j$ is an $n$-dimensional {\em universal} space (thus being, in particular, Einstein and with constant scalar invariants) -- see appendix~\ref{app_universal} and references therein for a definition and more details.\footnote{To make the reading of the paper more fluent, let us already mention here that universal spaces turn out to be ultimately equivalent to the well-known isotropy-irreducible homogeneous spaces, which have been thoroughly studied (see, e.g., \cite{Bessebook} and references therein).} This property of $\bh$ will be understood from now on. Quantities with a tilde will refer to the transverse space geometry. We normalise the transverse metric so that 
\be
 \tilde{R}_{ij}=(n-1)Kh_{ij} ,
 \label{def_K}
\ee
which obvioulsy implies $\tilde{R}=n(n-1)K$.

\subsection{Reduced field equations and universality}

\label{subsec_reduced}

The field equations derived from \eqref{action} (neglecting boundary terms) are of the form $\bE=0$, where $\bE$ is a symmetric, conserved rank-2 tensor locally constructed out of $\bg$ and its derivatives \cite{Eddington_book} (cf. also \cite{IyeWal94}). However, for the ansatz~\eqref{metric1} they can be drastically simplified thanks to the following

\begin{prop}
 \label{prop_E}
Consider any symmetric 2-tensor, $\bE$,  constructed from tensor products, sums and contractions from the metric $\bg$, the Riemann tensor $\bR$, and its covariant derivatives. Then for any metric of the form (\ref{metric1}) with $\bh$ universal we have: 
\be  
	\bE=F(r)\d t^2+G(r) \d r^2+H(r) h_{ij}(x^k)\d x^i\d x^j. 
	\label{E}
\ee 
\end{prop}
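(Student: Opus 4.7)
The plan is to exploit the isometry group of the metric \eqref{metric1}, noting that any $\bE$ constructed naturally from $\bg$, $\bR$ and $\nabla^{(k)}\bR$ is itself a natural tensor and hence is preserved by every isometry of $\bg$. The ansatz \eqref{metric1} has the following manifest symmetries: the static Killing vector $\partial_t$, the time-reversal $t\mapsto -t$ (an isometry because $\bg$ contains $\d t$ only through $\d t^2$), and the full isometry group $G$ of $(N,\bh)$ acting trivially on $(t,r)$ (this is indeed a spacetime isometry because the warping factor $r^2$ is $G$-invariant). Under the standing assumption that $\bh$ is universal, $G$ acts transitively on $N$ with isotropy representation that is irreducible on each tangent space $T_pN$.

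With these symmetries in hand, I would split the components of $\bE$ into three blocks according to how they transform as tensor fields on $N$ at fixed $(t,r)$, and treat each block in turn. First, the components $E_{tt}$, $E_{rr}$, $E_{tr}$ are $G$-invariant scalar fields on $N$; transitivity of $G$ forces them to be constant on $N$, while invariance under $\partial_t$ makes them functions of $r$ only. Second, the mixed components $E_{ti}$, $E_{ri}$ form $G$-invariant 1-forms on $N$; isotropy-irreducibility implies that the isotropy subgroup $H_p\subset G$ has no non-zero fixed vector in $T_pN$, so these 1-forms must vanish identically. Third, $E_{ij}$ is a $G$-invariant symmetric bilinear form on $N$; here the standard Schur-type argument for isotropy-irreducible homogeneous spaces (one of the defining properties recalled in appendix~\ref{app_universal}; see also Besse) shows that the space of such invariant forms is one-dimensional and spanned by $\bh$, giving $E_{ij}=H(t,r)h_{ij}$, with $H$ reducing to a function of $r$ alone by staticity.

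To finish, I would invoke the time-reversal symmetry: under $t\mapsto -t$ the component $E_{tr}$ picks up a minus sign, while by the previous step it depends only on $r$, so $E_{tr}(r)=-E_{tr}(r)$, whence $E_{tr}\equiv 0$. Assembling the three blocks then yields exactly the form \eqref{E}.

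The substantive step, on which the entire argument hinges, is the tensor-type reduction for the transverse block $E_{ij}$, since this is precisely where the ``universality'' hypothesis on $\bh$ is used; for a generic Einstein $\bh$ this step would fail, as the invariant symmetric 2-tensors built from $\tilde{\bR}$ and $\nabla^{(k)}\tilde{\bR}$ need not be proportional to $\bh$. All the remaining work is routine covariance bookkeeping, so the bulk of the proof really amounts to stating precisely the Schur-type property characterising isotropy-irreducible homogeneous spaces and verifying that it applies to every $\bE$ of the form considered.
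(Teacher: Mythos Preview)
Your proposal is correct and follows essentially the same strategy as the paper: exploit the isometries of~\eqref{metric1} (time-reversal $t\mapsto-t$, the Killing field $\partial_t$, and the lifted isometry group of the IHS transverse space $\bh$) to constrain the blocks of $\bE$, with the key Schur-type step being $E_{ij}\propto h_{ij}$ from isotropy-irreducibility. The only cosmetic difference is the bookkeeping of the off-diagonal components: the paper kills $E_{ti}$ and $E_{tr}$ together via the ``purely electric'' characterisation induced by time-reversal and then uses isotropy for $E_{ri}$, whereas you use the no-fixed-vector consequence of isotropy-irreducibility for both $E_{ti}$ and $E_{ri}$ and reserve time-reversal for $E_{tr}$ alone; both routes are valid and the substance is the same.
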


\begin{proof}
First, let us utilise that the metric is invariant under time-reversal: $t\mapsto-t$. This implies that any curvature tensor\footnote{I.e., any tensor constructed polynomially from the Riemann tensor and its covariant derivatives.} is purely electric (as defined in \cite{HerOrtWyl13}). In particular, $\bE$ is purely electric (and, of course, $t$-independent). For a symmetric 2-tensor this implies $E_{ti}=E_{tr}=0$. 
Second, if $ h_{ij}(x^k)\d x^i\d x^j$ is universal then this is an isotropy irreducibe (locally) homogeneous space (cf. appendix~\ref{app_universal}). This means that the isotropy group acts irreducibly on the tangent space of the transverse metric. These symmetries of $\bh$ can be lifted trivially to the total metric (\ref{metric1}), and hence, the tensor $\bE$ needs to be invariant under the these symmetries as well. Using the isotropy group (which acts irreducibly on $T_pM$), we thus get $E_{ij}\propto h_{ij}$ (cf. also \cite{Wolf68}) and $E_{ri}=0$. Finally, since $ h_{ij}(x^k)\d x^i\d x^j$ is a locally homogeneous space, the components $E_{tt}$, $E_{rr}$ can only depend on $r$, and $E_{ij}=H(r) h_{ij}(x^k)\d x^i\d x^j.$
\end{proof}

The tensorial field equation $\bE=0$ thus reduces to three ``scalar'' equations $F(r)=0$, $G(r)=0$ and $H(r)=0$. However, since $\bE$ is identically conserved, it is easy to see that $H(r)=0$ holds automatically once $F(r)=0=G(r)$ are satisfied (see also \eqref{conservation} for a frame reformulation of this statement). We are thus left with just two ODEs for the two metric functions $a(r)$ and $f(r)$. Their precise form will depend on the particular gravity theory under consideration and is of no interest for the general considerations of this paper (several explicit examples can be found in the references given in sections~\ref{sec_Eins}--\ref{sec_conf};  cf. appendix~\ref{app_RT} for quadratic gravity.).

We further observe that \eqref{E} means that, in a frame adapted to the two mWANDs $\pa_t\pm f\pa_r$, $\bE$ possesses only components of b.w. $\pm2$ and 0 (i.e., $E_{++}=E_{--}=f^{-1}(-E^t_{~t}+E^r_{~r})$, $E_{+-}=-e^af(E^t_{~t}+E^r_{~r})$ and $E_{ij}$).

The following Lemma will also be useful: 
\begin{lemma} 
\label{lemmaD}
If $E$ is constructed from only type D tensors then the mixed tensor components obey:\footnote{In the coordinates \eqref{ansatz_RT}, condition~\eqref{EttErr} is equivalent to $E_{vv}=0$.}
\be 
	E^t_{~t}=E^r_{~r}. 
	\label{EttErr}
\ee
\end{lemma}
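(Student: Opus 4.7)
The plan is to work in the null frame adapted to the two mWANDs $\partial_t \pm f\partial_r$ identified earlier in this section, supplemented by any orthonormal frame on the transverse space, and to exploit the boost-weight (b.w.) decomposition in this frame. Recall that a covariant tensor is of type D with respect to such a frame precisely when only its b.w.\ $0$ components are non-vanishing.

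First I would observe that the class of tensors having only b.w.\ $0$ components is closed under the three operations appearing in the hypothesis. Sums are trivial; tensor products preserve the property because boost weight is additive under $\otimes$; and index contractions preserve it because in this frame a b.w.\ $+b$ slot can only be paired with a b.w.\ $-b$ slot ($\ell \leftrightarrow n$, spatial $\leftrightarrow$ spatial), so contractions introduce no new non-zero boost-weight components. Hence, if $\bE$ is assembled from type D tensors via the operations stated, $\bE$ is itself of type D.

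Next I would invoke the observation recorded just after Proposition~\ref{prop_E}: in the aligned frame above, a symmetric 2-tensor $\bE$ of the form~\eqref{E} has at most b.w.\ $\pm 2$ and $0$ components, with the b.w.\ $\pm 2$ pieces given by $E_{++} = E_{--} = f^{-1}(-E^t_{~t} + E^r_{~r})$. Type D-ness of $\bE$ forces these to vanish, which yields exactly $E^t_{~t} = E^r_{~r}$, as claimed.

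The only point requiring a modicum of care is the closure of the ``only b.w.\ $0$'' property under contractions; however, this is no genuine obstacle and reduces to routine boost-weight bookkeeping. The substantive geometric input, already secured earlier in the paper, is the identification of $\partial_t \pm f\partial_r$ as the two distinguished mWANDs, which supplies the preferred null frame in which the type D condition takes its diagonal form and makes the conclusion immediate.
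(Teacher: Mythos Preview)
Your argument is correct and is essentially the same as the paper's: the paper states in one line that type D tensors enjoy boost isotropy and that this is inherited by $\bE$, which is precisely your closure-under-operations observation phrased in terms of invariance rather than boost-weight bookkeeping. Your version simply unpacks the mechanism (closure of the b.w.\ $0$ sector under sums, tensor products, and metric contractions) and then reads off $E^t_{~t}=E^r_{~r}$ from the vanishing of the b.w.\ $\pm2$ components, exactly as the paper intends.
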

\begin{proof}
This follows simply from the fact that any type D tensor has a boost isotropy; hence so must the tensor $E$. 
\end{proof}

Condition~\eqref{EttErr} means that $F(r)=-f^2G(r)$ in \eqref{E}, so that in this case one is left with (at most) one non-trivial field equation (ODE) for the two metric functions $a(r)$ and $f(r)$, thus leaving (at least) one of those {\em undetermined}. Something similar (but not quite the same) occurs in Lovelock theories that admit degenerate vacua (see sections~\ref{sec_GB} and \ref{sec_Lov} below and references therein).

\subsection{Extremal limits and near-horizon geometries}

Let us briefly comment on the near-horizon geometries associated with extremal limits of the universal black holes described above (we refer to the review \cite{KunLuc13} and references therein for definitions and general properties of near-horizon geometries). 

Metric~\eqref{metric1} possesses horizons at zeros of $f(r)$. In the coordinates~\eqref{ansatz_RT}, these correspond to zeros of ${\cal H}(v)$ (where the Killing vector field $\pa_u$ becomes null). We now assume that, at least in certain theories~\eqref{action}, there exist solutions with a {\em degenerate} horizon, i.e., for which (without loosing generality, one can always redefine $v\mapsto v+v_0$ so that such horizon lies at $v=0$)
\be
 {\cal H}(v)=v^2{\cal F}(v) ,
\ee  
where ${\cal F}(v)$ is a smooth function (we also assume $r(v)$ to be such). Then, by rescaling $v\mapsto\epsilon v$, $u\mapsto\epsilon^{-1}u$ and taking the limit $\epsilon\to0$ \cite{KunLuc13} one arrives at the near-horizon line-element
\be
 \bg=-2\d u\d v-2{\cal F}_0\d u^2+ r^2_0h_{ij}(x^k)\d x^i\d x^j , 
 \label{extremal}
\ee
where ${\cal F}_0\equiv{\cal F}(0)$ and $r_0\equiv r(0)$. This is a Nariai-like direct product of dS$_2$ (if ${\cal F}_0<0$) or AdS$_2$ (if ${\cal F}_0>0$) with the IHS (universal) space $\bh$ characterizing the original black hole solution \eqref{ansatz_RT}. It possesses a recurrent null vector field $\pa_v$ and thus belongs to the Kundt class (cf, e.g., the review \cite{OrtPraPra13rev} and references therein).

From the above result, it follows that the near-horizon geometry of the considered extremal black holes is essentially theory-independent (up to fixing the two constants ${\cal F}_0$ and $r_0$). A similar ``universality'' of near horizon geometries was discussed for spherically symmetric spacetimes in \cite{Gurses92}.

\section{Einstein gravity}

\label{sec_Eins}

The purpose of the following sections is to illustrate the results of section~\ref{sec_universal} by giving explicit examples of black holes solutions in various gravity theories of the form~\eqref{action}. As a warm-up, let us start with the simplest case of $d$-dimensional general relativity, for which ${\cal L}=\sqrt{-g}\frac{1}{\kappa}(R-2\Lambda)$ and $E_{\mu\nu}=R_{\mu\nu}-\frac{1}{2}Rg_{\mu\nu}+\Lambda g_{\mu\nu}$. From the (reduced) field equations for~\eqref{metric1} one gets the generalized Schwarzschild-(Anti) de Sitter solutions \cite{Tangherlini63,GibWil87,Birmingham99}: 
\be 
 a(r)=0, \qquad f(r)=K-\frac{M}{r^{d-3}}-\lambda r^2 , 
 \label{f_Einst}
\ee
where
\be
 \lambda=\frac{2\Lambda}{(d-1)(d-2)} .
\ee
In particular, the first of \eqref{f_Einst} follows from \eqref{Rvv}. For later purposes, let us note that \eqref{f_Einst} gives $R=d(d-1)\lambda$.

\section{Gauss-Bonnet gravity}

\label{sec_GB}

This theory is of particular interest in the low-energy limit of string theory \cite{Zwiebach85}. It is defined by the Lagrangian density 
\be
 {\cal L}=\sqrt{-g}\left[\frac{1}{\kappa}(R-2\Lambda)+\gamma I_{GB}\right] , \qquad I_{GB}=R_{\mu\nu\rho\sigma}R^{\mu\nu\rho\sigma}-4R_{\mu\nu}R^{\mu\nu}+R^2 ,
 \label{GB}
\ee
where $\gamma$ is a constant parameter, giving 
\be
 E_{\mu\nu}=\frac{1}{\kappa}\left(R_{\mu\nu}-\frac{1}{2}Rg_{\mu\nu}+\Lambda g_{\mu\nu}\right)+2\gamma\left(RR_{\mu\nu}-2R_{\mu\rho\nu\sigma}R^{\rho\sigma}+R_{\mu\rho\sigma\tau}R_\nu^{\ \rho\sigma\tau}-2R_{\mu\rho}R_\nu^{\ \rho}-\frac{1}{4}I_{GB}g_{\mu\nu}\right) .
 \label{E_GB}
\ee

\subsection{Generic theory}

\label{subsec_GB_gen}

The explicit form of \eqref{E_GB} for the ansatz \eqref{metric1} was given in \cite{DotGle05} and there is no need to reproduce it here. As noticed in \cite{DotGle05}, the field equation $E^t_{~t}-E^r_{~r}=0$ (i.e., $E_{vv}=0$, cf. appendix~\ref{app_RT}) shows that {\em generically} one can set
\be
 a(r)=0 ,
 \label{a=0_GB}
\ee
while integrating the remaining field equation gives \cite{DotOliTro09,Bogdanosetal09,Maeda10,DotOliTro10} 
\be
 f(r)=K+\frac{r^2}{2\kappa\hat\gamma}\left[1\pm\sqrt{1+4\kappa\hat\gamma\left(\frac{2\Lambda}{n(n+1)}+\frac{\mu}{r^{n+1}}\right)-\frac{4\kappa^2\hat\gamma^2\tilde I_W^2}{r^4}}\right] ,
 \label{f_GB}
\ee
where $\mu$ is an integration constant and\footnote{Since $\bh$ is Einstein, using~\eqref{def_K} one finds $\tilde I_{GB}=\tilde C_{ijkl}\tilde C^{ijkl}+n(n-1)(n-2)(n-3)K^2$.\label{footn_GB}}
\be
 \hat\gamma=(n-1)(n-2)\gamma , \qquad n(n-1)(n-2)(n-3)\tilde I_W^2=\tilde C_{ijkl}\tilde C^{ijkl} .
 \label{IW}
\ee

Eq.~\eqref{f_GB} clearly illustrates how the Weyl tensor of the geometry $\bh$ affects the solution and its asymptotic behaviour. The branch with the minus sign admits a GR limit to \eqref{f_Einst} by taking $\hat\gamma\to0$. The non-negative constant $\tilde I_W^2$  vanishes iff $\bh$ is conformally flat (so necessarily when $n=3$), 
in which case one recovers the well-known black holes with a constant curvature base space \cite{BouDes85,Wheeler86_GB,Cai02}. See \cite{DotOliTro09,Bogdanosetal09,Maeda10,DotOliTro10} for properties of the spacetimes with $\tilde I_W^2\neq0$.

In the special case of {\em pure} Gauss-Bonnet gravity (i.e., for $\kappa^{-1}=0$), the above solution is replaced by\footnote{We have redefined the cosmological constant $\Lambda=\kappa\hat\Lambda$ such that it survives for $\kappa^{-1}=0$.}
\be
 a(r)=0 , \qquad f(r)=K\pm\sqrt{\frac{1}{\hat\gamma}\left(\frac{2\hat\Lambda}{n(n+1)}r^4+\frac{\mu}{r^{n-3}}\right)-\tilde I_W^2} ,
 \label{f_GBpure}
\ee
This was obtained in \cite{DadPon15} for the special case when $\bh$ is a product of two equal spheres (cf.~\cite{Wheeler86_GB,Cai04,CaiOht06} when $\tilde I_W^2=0$). As follows from an observation in \cite{Buenoetal16}, it is interesting to note that metric~\eqref{f_GBpure} also solves more a general theory ${\cal L}=\sqrt{-g}(-2\hat\Lambda+\gamma I_{GB}+\eta I_{GB}^{d/4})$ (where $\eta$ is a new coupling constant). The results of \cite{Buenoetal16} further imply that it is also a solution of a general class of theories defined by ${\cal L}=\sqrt{-g}F(I_{GB})$, provided $F'\neq0$ and $8\hat\Lambda/\gamma=(d-4)F/F'$ (both these condition must hold on-shell), where $F'=\pa F/\pa I_{GB}$.

\subsection{Special fine-tuned theories ($I_W^2=0$)}

When $\bh$ is conformally flat and the coupling constants are suitably fine-tuned, in addition to \eqref{a=0_GB}, \eqref{f_GB} there exist also the ``geometrically free'' solutions of \cite{Wheeler86} (cf. also \cite{Whitt88,ChaDuf02,DotOliTro09,Bogdanosetal09,DotOliTro10,MaeWilRay11}) 
\be
 e^{-a(r)}f(r)=K+\frac{r^2}{2\kappa\hat\gamma} , \qquad 8\kappa\hat\gamma\Lambda=-n(n+1) , \qquad \tilde I_W^2=0 ,
\ee
for which the metric function $e^{a(r)}f(r)$ remains undetermined (this cannot occur when $\tilde I_W^2\neq0$ \cite{DotOliTro09,Bogdanosetal09,Maeda10,DotOliTro10}). 

For pure Gauss-Bonnet gravity one has instead
\be
 e^{-a(r)}f(r)=K>0 , \qquad \tilde I_W^2=0 ,
 \label{f_GBpure_special}
\ee
with $e^{a(r)}f(r)$ undetermined, as follows easily from \cite{DotGle05}.

\section{Lovelock gravity}

\label{sec_Lov}

In more than six dimensions, a natural extension of Gauss-Bonnet (and Einstein) gravity is given by Lovelock gravity \cite{Lovelock71}, which retains the second order character of the field equations. The Lagrangian density 
\be
 {\cal L}=\sqrt{-g}\sum_{k=0}^{[(d-1)/2]}c_k{\cal L}^{(k)} , \qquad\qquad {\cal L}^{(k)}=\frac{1}{2^k}\delta_{\mu_1\nu_1\ldots \mu_k\nu_k}^{\rho_1\sigma_1\ldots \rho_k\sigma_k}R_{\rho_1\sigma_1}^{\mu_1\nu_1}\ldots R_{\rho_k\sigma_k}^{\mu_k\nu_k} ,
 \label{Lagr}
\ee
gives \cite{Lovelock71}
\be
 E^\mu_\rho=\sum_{k=0}^{[(d-1)/2]}c_k G^{\mu(k)}_{\rho} , \qquad\qquad G^{\mu(k)}_{\rho}=-\dfrac{1}{2^{k+1}}\delta^{\mu\mu_1\nu_1\ldots \mu_k\nu_k}_{\rho\rho_1\sigma_1\ldots \rho_k\sigma_k}R^{\rho_1\sigma_1}_{\mu_1\nu_1}\ldots R^{\rho_k\sigma_k}_{\mu_k\nu_k} ,
 \label{fieldeqns}
\ee
where $\delta^{\mu_1\ldots \mu_p}_{\rho_1\ldots \rho_p}=p!\delta^{\mu_1}_{[\rho_1}\ldots\delta^{\mu_p}_{\rho_p]}$ and $c_k$ are coupling constants. If $c_0$, $c_1$ and $c_2$ are the only non-zero constants, one recovers the Gauss-Bonnet theory~\eqref{GB}.

Similarly as in section~\ref{subsec_GB_gen}, from $E^t_{~t}-E^r_{~r}=0$ one {\em generically} obtains 
\be
 a(r)=0 ,
\ee
while the remaining field equation determines $f(r)$ as the root of an algebraic equation (generalizing \eqref{f_GB}) that depends on an integration constant, the coupling constants $c_k$ and the Euler invariants of the geometry $\bh$ \cite{Ray15} (while the constraints on a generic transverse space were first obtained in \cite{Ray15}, the fact that it can be consistently taken to be IHS was noticed in \cite{OhaNoz15}). When the base space is a round sphere, one recovers the early results of \cite{Wheeler86_GB} (see \cite{Cai04} for the case of zero and negative constant curvature). However, in the latter case there exist particular choices of the $c_k$ that admit solutions with one undetermined metric function \cite{Wheeler86,Whitt88,MaeWilRay11}.

The field equation determining $f(r)$ simplifies considerably in the case of {\em pure} Lovelock gravity, i.e., when a single coefficient $c_{\bar k}$ for $k=\bar k>0$ (plus a possible cosmological term $c_0$) is non-zero in \eqref{Lagr} \cite{Ray15}. Let us just present an example for which this equation can be integrated explicitly. Namely, in the case of the pure cubic theory ($\bar k=3$, which requires $d\ge7$), using Cardano's formula one can solve the general equation given in \cite{Ray15} to obtain\footnote{In order to arrive at~\eqref{f_cubic} we used the identities ${\cal \tilde L}^{(1)}=\tilde R$ and ${\cal \tilde L}^{(2)}=\tilde I_{GB}$, and the fact that for any $n$-dimensional Einstein space $\tilde R=n(n-1)K$  and $\tilde I_{GB}$ is as in footnote~\ref{footn_GB}. Furthermore, it was useful to write 
\[
  {\cal \tilde L}^{(3)}=4\tilde C_{ijkl}\tilde C^{klmn}\tilde C_{mn}^{\phantom{mn}ij}+8\tilde C_{ijkl}\tilde C^{mjkn}\tilde C^{i\phantom{mn}l}_{\phantom{i}mn}+(n-4)(n-5)K\left[3\tilde C_{ijkl}\tilde C^{ijkl}+n(n-1)(n-2)(n-3)K^2\right] ,
\]
which can be obtained easily using (19,\cite{OlivaRay10_prd}).}
\beq
f(r)-K= & & \frac{1}{(2\hat c_3)^{1/3}}\left[c_0r^6-\frac{\mu}{r^{n-5}}+\hat c_3\tilde J_W+\sqrt{\left(c_0r^6-\frac{\mu}{r^{n-5}}+\hat c_3\tilde J_W\right)^2+4\hat c_3^2\tilde I_W^6}\right]^{1/3} \nonumber \\
			 & & {}+\frac{1}{(2\hat c_3)^{1/3}}\left[c_0r^6-\frac{\mu}{r^{n-5}}+\hat c_3\tilde J_W-\sqrt{\left(c_0r^6-\frac{\mu}{r^{n-5}}+\hat c_3\tilde J_W\right)^2+4\hat c_3^2\tilde I_W^6}\right]^{1/3} , \label{f_cubic}
\eeq
where $\mu$ is an integration constant and we have defined $I_W^2$ as in~\eqref{IW} and 
\beq
 & & \hat c_3=(n+1)n(n-1)(n-2)(n-3)(n-4)c_3 , \\ 
 & & (n-1)(n-2)(n-3)(n-4)(n-5)\tilde J_W=4\tilde C_{ijkl}\tilde C^{klmn}\tilde C_{mn}^{\phantom{mn}ij}+8\tilde C_{ijkl}\tilde C^{mjkn}\tilde C^{i\phantom{mn}l}_{\phantom{i}mn} .
\eeq

Solution~\eqref{f_cubic} was obtained in \cite{DadPon15_JHEP} for the special case when $\bh$ is a product of two equal spheres (a solution for cubic Lovelock theory including lower order curvature terms was obtained earlier in \cite{FarDeh14}). When $I_W^6=0$ ($\Rightarrow J_W=0$) the base space is of constant curvature and one recovers the solution obtained in \cite{CaiOht06} (see also \cite{Ray15,Buenoetal16}). From \cite{Buenoetal16} it follows that \eqref{f_cubic} is also a solution of the theory ${\cal L}=\sqrt{-g}(c_0+c_3{\cal L}^{(3)}+\eta{\cal L}^{(3)d/6})$, as well as of the class of theories defined by ${\cal L}=\sqrt{-g}F({\cal L}^{(3)})$, provided (on-shell) $F'\neq0$ and $6c_0/c_3=-(d-6)F/F'$.

\section{Quadratic gravity}

\label{sec_QG}

Apart from the special case of Gauss-Bonnet gravity, actions quadratic in the curvature have been studied for a long time \cite{Weyl18,Eddington_book,Lanczos38,Gregory47,Buchdahl48}. The most general such theory is defined by
\be
  {\cal L}= \sqrt{-g} \left[\frac{1}{\kappa} \left( R - 2 \Lambda \right)+ \alpha R^2 + \beta R_{\mu\nu}R^{\mu\nu}
      + \gamma I_{GB} \right] ,
  \label{QG}
\ee
with $I_{GB}$ as in \eqref{GB} and $\alpha$, $\beta$, $\gamma$ are constant parameters.  This gives rise in general to field equations the fourth order, namely \cite{Buchdahl48} (we follow the notation of \cite{DesTek03})
\beq
  E_{\mu\nu}=\frac{1}{\kappa} \left( R_{\mu\nu} - \frac{1}{2} R g_{\mu\nu} + \Lambda g_{\mu\nu} \right)
  + 2 \alpha R \left( R_{\mu\nu} - \frac{1}{4} R g_{\mu\nu} \right)
  + \left( 2 \alpha + \beta \right)\left( g_{\mu\nu} \Box - \nabla_\mu \nabla_\nu \right) R \nonumber \\
	{}+ \beta\left[\Box\left( R_{\mu\nu} - \frac{1}{2} R g_{\mu\nu} \right)
  + \left(2R_{\mu\rho\nu\sigma} - \frac{1}{2} g_{\mu\nu} R_{\rho\sigma} \right) R^{\rho\sigma}\right]  \nonumber \\
  {}+ 2 \gamma \left( R R_{\mu\nu} - 2 R_{\mu\rho\nu\sigma} R^{\rho\sigma}
  + R_{\mu\rho\sigma\delta} R_{\nu}^{\phantom{\nu}\rho\sigma\delta} - 2 R_{\mu\rho} R_{\nu}^{\phantom{\nu}\rho}
  - \frac{1}{4} g_{\mu\nu}I_{GB} \right) .
  \label{E_QG}
\eeq

Some of these theories are plagued by ghosts \cite{Stelle77}. As an exception to this, the special subcase $\alpha=0=\beta$ has second order field equations and corresponds to Gauss-Bonnet gravity, already discussed in section~\ref{sec_GB}. For arbitrary values of $\alpha$, $\beta$, $\gamma$ and $\Lambda$, the explicit form of \eqref{E_QG} for the ansatz \eqref{metric1} is given in appendix~\ref{app_RT_QG} using the Robinson-Trautman coordinates~\eqref{ansatz_RT}. Let us discuss now a few subcases of special interest.

\subsection{Einstein spacetimes: $d=4$ or $\gamma=0$}

\label{subsec_QG_Einst}

In certain cases, Einstein spacetimes can also solve quadratic gravity. This is always true in four dimensions \cite{Buchdahl48_2,Buchdahl48_3} so that the $d=4$ Schwarzschild-(A)dS black holes \eqref{f_Einst} are solutions of  quadratic gravity (with $\lambda=\Lambda/3$) \cite{Eddington_book,Buchdahl48_2}. For arbitrary $d$, the form of \eqref{E_QG} when $\bg$ is Einstein has been given in (6,\cite{MalPra11prd}). For our ansatz \eqref{metric1} with \eqref{f_Einst}, one easily sees that, if $(d-4)\gamma\neq4$, only spacetimes of constant curvature are possible (since $R_{\mu\nu\rho\sigma}R^{\mu\nu\rho\sigma}$ must be a constant). Therefore, for $d\neq4$ black hole solutions of this form  can only occur for quadratic gravities with $\gamma=0$. In that case, (6,\cite{MalPra11prd}) with \eqref{f_Einst} reduces to a single condition fixing the effective cosmological constant $\lambda$
\be
 2\kappa^{-1}\Lambda=(d-1)\lambda\left[\kappa^{-1}(d-2)+(d-1)(d-4)\lambda(d\alpha+\beta)\right] .
\label{QG_Einst}
\ee
For $(d-4)(d\alpha+\beta)\neq0$, this is a quadratic equation for $\lambda$, therefore there exist two distinct Einstein black holes \eqref{f_Einst}, except in the degenerate case $(d-2)^2\kappa^{-1}+8(d-4)(d\alpha+\beta)\Lambda=0$, for which they coincide. For $(d-4)(d\alpha+\beta)=0$ there exists a single black hole. In all cases, the transverse metric $\bh$ can be any Einstein space (not necessarily IHS). These Einstein black holes in arbitrary dimension were already discussed for $\Lambda=0$ in \cite{MigWil92} and in the case when $\bh$ is of constant curvature in \cite{NojOdi01}.

\subsection{Pure $R^2$ theory}

\label{subsec_R2}

Apart from Gauss-Bonnet gravity, the simplest quadratic gravity theory is obtained by setting $\kappa^{-1}=\beta=\gamma=0$ in \eqref{QG}, i.e., by considering ${\cal L}= \alpha\sqrt{-g}R^2$ (this is clearly also a subset of $F(R)$ gravity \cite{Buchdahl70}, cf. also section~\ref{sec_F_pol}). The field equations reduce to \cite{Gregory47,Buchdahl48}
\beq
  R \left( R_{\mu\nu} - \frac{1}{4} R g_{\mu\nu} \right)+\left( g_{\mu\nu} \Box - \nabla_\mu \nabla_\nu \right) R=0 .
	\label{R2}
\eeq

Clearly, special solutions of this theory are given by spacetimes with $R=0$ (in particular, Ricci-flat spacetimes, as noticed in \cite{Pauli19,Buchdahl48_2}). It is also obvious that proper Einstein spacetimes solve \eqref{R2} iff $d=4$ \cite{Buchdahl48_2,Buchdahl48_3} (cf. also section~\ref{subsec_QG_Einst} and \cite{MalPra11prd}).\footnote{This has to do with the fact that the Lagrangian density $\sqrt{-g}R^2$ is scale invariant iff $d=4$, while for general $d$ the same property is shared by $\sqrt{-g}R^{d/2}$ \cite{Buchdahl48_3}.}

As for static black holes, for simplicity we restrict ourselves to the special case $a(r)=0$, i.e., to the ansatz~\eqref{schw} (this is now an extra assumption, as opposed to the Einstein and generic Gauss-Bonnet and Lovelock cases). The field equations~\eqref{R2} can then be easily integrated and give
\be
 f(r)=K-\frac{\mu_1}{r^{d-3}}-\frac{\mu_2}{r^{d-2}} ,
 \label{fR2}
\ee
where $\mu_1$ and  $\mu_2$ are integration constants. This solution has $R=0$, and is asymptotically flat when $\bh$ is a round $S^n$. For $\mu_2=0$ it reduces to \eqref{f_Einst} with $\Lambda=0$, which is Ricci-flat. This generalizes previous results obtained in the case of spherical symmetry for $d=4$ in \cite{DesTek03cqg} and for any $d$ and any $\bh$ of constant curvature in \cite{Hendi10}.\footnote{The higher-dimensional spherically symmetric solution~(26) of \cite{DesTek03cqg} appears to be incorrect, cf. \cite{Hendi10} and our~\eqref{fR2}.}  In \cite{Hendi10} it was noticed that, for $d=4s$, the $\mu_2$ term mimics the backreaction of a non-linear, conformally invariant Maxwell term $(F_{\mu\nu}F^{\mu\nu})^s$ in Einstein gravity \cite{HasMar07}. We further observe that, when $d=2p$, it also alternatively mimics the backreaction of a linear electromagnetic $p$-form field \cite{BarCalCha12} (see also \cite{OrtPodZof15}). 

As noticed above, for $d=4$ there is additionally the Einstein solution \eqref{f_Einst} with an arbitrary $\lambda$. Furthermore, some spherically symmetric solutions with $a(r)\neq0$ have been given, e.g., in \cite{CapStaTro08,Kehagiasetal15}.

\subsection{$\Lambda$-$R^2$ theory}

\label{subsec_R2_Lambda}

Adding a cosmological constant gives the theory ${\cal L}= \sqrt{-g} \left(-\frac{1}{\kappa}2\Lambda+ \alpha R^2\right)$, for which the field equations are given by \eqref{R2} with an additional term $\frac{1}{2\kappa\alpha}\Lambda g_{\mu\nu}$ on the LHS. The ansatz $a(r)=0$ leads only to the Einstein solution \eqref{f_Einst} with 
\be
 2\Lambda=d(d-1)^2(d-4)\alpha\kappa \lambda^2 .
\ee
This exists only for $d>4$ (more generally, no Einstein spacetimes solve this theory when $d=4$).

\subsection{Einstein-$R^2$ theory}

It is also natural to consider adding the $R^2$ term to Einstein's theory, i.e., ${\cal L}= \sqrt{-g} \left[\frac{1}{\kappa} \left( R - 2 \Lambda \right)+ \alpha R^2\right]$. Clearly here 
\be
  \frac{1}{\kappa} \left( R_{\mu\nu} - \frac{1}{2} R g_{\mu\nu} + \Lambda g_{\mu\nu} \right)
  + 2 \alpha\left[R \left( R_{\mu\nu} - \frac{1}{4} R g_{\mu\nu} \right)+\left( g_{\mu\nu} \Box - \nabla_\mu \nabla_\nu \right) R\right]=0 .
	\label{Eins_R2}
\ee

It is useful to distinguish between an Einstein and a non-Einstein branch.
\begin{enumerate}[(i)]

\item This theory admits {\em Einstein} black holes \eqref{f_Einst} as solutions, which can be obtained from the solutions discussed in section~\ref{subsec_QG_Einst} by setting $\beta=0$ therein (cf. also \cite{MigWil92,NojOdi01}).

\item In search for {\em non-Einstein} solutions, as in sections~\ref{subsec_R2}, \ref{subsec_R2_Lambda} let us make the simplifying assumption $a(r)=0$. The field equations then imply that in \eqref{schw} one has
\be
 f(r)=K-\frac{\mu_1}{r^{d-3}}-\frac{\mu_2}{r^{d-2}}-\lambda r^2 , \qquad \lambda=-\frac{1}{2d(d-1)\kappa\alpha} ,
\label{fEins_R2_special}
\ee
and that the {\em fine-tuning} of the parameters\footnote{More generally, for $d>4$ this fine-tuning is a necessary and sufficient condition for a spacetime with $R=$const to be a solution of \eqref{Eins_R2} (which fixes $R=4\Lambda$). For $d=4$ this condition is only sufficient, since in this particular dimension also all Einstein spacetimes solve \eqref{Eins_R2} identically \cite{Buchdahl48_2,Buchdahl48_3} (without any fine-tuning).}
\be
 8\kappa\alpha\Lambda=-1 ,
\ee
must additionally hold (otherwise $\mu_2=0$ and one is back in the Einstein case). It is worth observing that these non-Einstein solutions occur precisely at the critical point identified in \cite{NojOdi01} (in the case $\mu_2=0$). The analogy with $p$-form solutions mentioned after~\eqref{fR2} holds also here (solution~\eqref{fR2} can be thought as the limit $\kappa^{-1},\Lambda\to0$ of \eqref{fEins_R2_special}). When $\bh$ is constant curvature, these solutions were obtained in  \cite{HenEslMou12}, and they are asymptotically (A)dS when $\bh$ is a round $S^n$.

\end{enumerate}

In both the above cases $R=d(d-1)\lambda$ is a constant.

\section{$F(R)$ gravity}

\label{sec_F_pol}

Going beyond second order in powers of the curvature, a relatively simple and widely explored theory is given by $F(R)$ gravity, which was originally considered from a cosmological viewpoint \cite{Buchdahl70} but has subsequently been considered also in the context of black hole physics (see, e.g., \cite{MigWil92} and further references given below). The Lagrangian density
\be
 {\cal L}= \sqrt{-g} F(R) ,
\ee
in the metric approach gives rise to the equations of motion (in general of 4th order) \cite{Buchdahl70}
\beq
  \left(R_{\mu\nu}+g_{\mu\nu}\Box-\nabla_\mu\nabla_\nu\right)F'-\frac{1}{2}Fg_{\mu\nu}=0 ,
	\label{f(R)}
\eeq
where $F'=\pa F/\pa R$.

From now on we will only consider polynomial theories 
\be
 F(R)=\sum_{k=0}c_kR^k ,
 \label{F_pol}
\ee
where the $c_k$ are constants and the sum extends to an arbitrary natural number (or is infinite if we simply assume $F$ to be analytic at $R=0$). For simplicity, we shall further restrict ourselves to spacetimes with $R=$const. The field equations~\eqref{f(R)} then generically imply that the spacetime is {\em Einstein}, with the constant value of $R$ determined by (cf. \cite{BarOtt83} when $d=4$)
\be
 \sum_{k=0}\left(d-2k\right)c_kR^k=0 .
\ee
($R$ remains arbitrary in the scale invariant case where $c_{d/2}$ is the only non-zero coefficient in \eqref{F_pol} \cite{Buchdahl48_3}, cf. also \cite{delDobMar09}.) With our ansatz~\eqref{metric1}, this gives the Einstein black hole \eqref{f_Einst}, as discussed in \cite{MigWil92} for $c_0=0$ (see \cite{BueCan17} for some  comments) -- the Weyl tensor of $\bh$ does not enter the field equations (cf.~\eqref{R01}--\eqref{R} and  \eqref{DR}--\eqref{BoxR}) so that $\bh$ can be any Einstein space (not necessarily IHS).

However, for theories such that (for a particular choice of $R=$const) $F(R)=0=F'(R)$ (cf. \cite{NojOdi14} for related comments), i.e.,
\be
 \sum_{k=0}c_kR^k=0=\sum_{k=0}kc_kR^{k-1} ,
 \label{tuning_F}
\ee
{\em non-Einstein} spacetimes with $R=$const are also solutions (eq.~\eqref{tuning_F} can be used to fix $R$ and the bare cosmological constan $c_0$). For example, with the ansatz~\eqref{schw} one obtains $f$ as in \eqref{fEins_R2_special}, with $\lambda$ determined by $R=d(d-1)\lambda$ -- solutions of this type when $\bh$ is of constant curvature have been obtained for certain $F(R)$ theories in \cite{Hendi10,HenEslMou12} (see also \cite{NojOdi14} in four dimensions). 
This solution reduces to \eqref{fR2} in the case $R=0$, for which the fine-tuning is simply $c_0=0=c_1$ (i.e., \eqref{fR2} represents static black holes for all theories~\eqref{F_pol} of quadratic or higher order, as discussed in section~\ref{subsec_R2} in a special case).

A non-Einstein solution with $a(r)\neq$const is given, for example, by \eqref{metric1} with
\be
 e^{a(r)}f(r)=1 , \qquad f(r)=K-\frac{\mu}{r^{d-3}}-\lambda r^2 , \qquad R=(d-1)(d-2)\lambda ,
\ee
and with \eqref{tuning_F}, which extends a four-dimensional traversable wormhole of \cite{CalRinSeb18}.

\section{Special conformal gravities}

\label{sec_conf}

In four dimensions, conformal gravity (a subcase of quadratic gravity defined by ${\cal L}= \sqrt{-g} C_{\mu\nu\rho\sigma}C^{\mu\nu\rho\sigma}$) has attracted interest for some time. Apart from the freedom of conformal rescalings, it also possesses the interesting property that all conformally Einstein metrics solve it (in particular, all Einstein metrics) \cite{Buchdahl53}. In six dimensions, the unique polynomial theory with the same property is defined by \cite{LuPanPop13}
\be
{\cal L}= \sqrt{-g}\left(4I_1+I_2-\frac{1}{3}I_3\right)  \qquad (d=6) , 
 \label{conf_6D}
\ee
with
\beq
 & & I_1=C_{\mu\rho\sigma\nu}C^{\mu\alpha\beta\nu}C_{\alpha}{}^{\rho\sigma}{}_{\!\!\beta} , \qquad  I_2=C_{\mu\nu\rho\sigma} C^{\rho\sigma\alpha\beta}
           C_{\alpha\beta}{}^{\mu\nu} , \nonumber \\
 & & I_3=C_{\mu\rho\sigma\lambda}\Big(\delta^\mu_\nu\, \Box +
    4R^\mu{}_\nu - \frac{6}{5} R\, \delta^\mu_\nu\Big) C^{\nu\rho\sigma\lambda}
   + \nabla_\mu J^\mu , 
\eeq
and the divergence term $\nabla_\mu J^\mu$ in $I_3$ can be found in \cite{LuPanPop13}.  
Static black hole solutions of \eqref{conf_6D} with a transverse space of constant curvature were found in \cite{LuPanPop13} and, in particular, a three-parameter subset of solutions corresponds to black holes conformal to the 6D Schwarzschild-(A)dS metric (see the comments below) -- this special family can be thus obtained by simply performing a conformal transformation \cite{LuPanPop13}, with no need to integrate the complicated field equations which follow from~\eqref{conf_6D}.

In arbitrary dimensions, let us assume that an extension of \eqref{conf_6D} is given, i.e., a conformal theory of gravity which admits Einstein spacetimes as solutions (the precise form of this theory is not important for the following observations). By applying a conformal transformation analogous to the one discussed in \cite{LuPanPop13} to the Einstein black hole metric~\eqref{f_Einst}, one can extend the special solutions of \cite{LuPanPop13} to higher dimensions and to arbitrary Einstein transverse spaces. One thus obtains a conformally Einstein solution given by \eqref{schw} with\footnote{In order to simplify the notation, we call again here $r$ the radial coordinate, although it is not the same $r$ which was used in \eqref{f_Einst} (i.e., before the conformal transformation).}
\be
 f(r)=a_0r^2+a_1r+a_2-\mu\sum_{k=0}^{d-4}\binom{d-1}{k}\frac{c^k}{r^{d-3-k}} ,
 \label{f_conf}
\ee
where
\be
 a_0=Kc^2-\mu c^{d-1}-\lambda , \qquad a_1=2Kc-\mu(d-1)c^{d-2} , \qquad a_2=K-\mu c^{ d-3}\frac{(d-1)(d-2)}{2} ,
\ee
and $c$ is an additional parameter introduced by the conformal transformation. In general, all powers of $r$ from $r^2$ to $r^{3-d}$ are present in \eqref{f_conf}, and $R$ is not a constant for these solutions. For $d=4$ and $K=1$ one recovers the solution of \cite{Riegert84} (see also \cite{ManKaz89,Buchdahl53}, and \cite{Klemm98_cqg} for an arbitrary $K$), while for $d=6$ and $\bh$ of constant curvature those of \cite{LuPanPop13}. However, in 6D solutions more general than \eqref{f_conf} exist which are not conformally Einstein \cite{LuPanPop13}, and this is likely to be true also in higher dimensions. In order to study those, however, one needs to consider a specific conformal gravity and integrate explicitly its field equations, which goes beyond the scope of this paper.

\section{Discussion}

\label{sec_discuss}

We have shown that a generalization of a Schwarzschild-like ansatz can be consistently employed to find $d$-dimensional static vacuum black hole solutions in {\em any} metric theory of gravity~\eqref{action}. In a nutshell, this consists in replacing the standard spherical base space metric by an arbitrary isotropy-irreducible homogeneous space. This gives rise to large families of static solutions and dramatically enlarges the space of permitted horizon geometries, well beyond the usual case of horizons of constant curvature. Let us emphasize that we arrived at our conclusions even without the need of specifying the explicit form of the underlying equations of motion. Our results thus apply to general higher-derivative theories, for which constraints on the horizon geometry may generically contain an arbitrary number of covariant derivatives of the Riemann tensor (in contrast to the previously studied case of Lovelock gravity, for which the horizon constraints are purely algebraic -- see sections \ref{sec_GB} and \ref{sec_Lov} for references). 

The present paper thus makes a first step towards a theory-independent characterization of permitted horizon geometries of static black holes. This can be clearly used as a starting point for obtaining full horizon characterizations for specific theories (which may differ from our conclusions in the sense that certain horizon geometries may be permitted in some theories but not in others, thus not being {\em universal}). In this respect, it would be interesting to understand whether the conditions we have obtained are also necessary.

We have also exemplified our results in various theories of gravity which appear to be of considerable interest, but the same methods can be applied in other theories as well. Let us mention, for example, that there is a growing interest also in (higher derivative) modifications of Lovelock's gravity such as quasi-topological gravities \cite{OlivaRay10,OlivaRay10_prd,MyeRob10,Dehghanietal12,Cisternaetal17} (see also, e.g., \cite{BueCanHen20} and references therein), to which our results also apply.

Various results of the present paper can be extended to black hole solutions with matter, such as electromagnetic or scalar fields -- this will be discussed elsewhere. It would also clearly be desirable to understand physical properties of such universal black holes, such as their thermodynamics and stability. These will in general depend on the considered theory. It should be pointed out that, although the metric of the base space does not enter the field equations for the static BH ansatz (as we have shown), it may still affect the stability of the solution (see for example \cite{GibHar02} in Einstein gravity).

\section*{Acknowledgments}

M.O. is grateful to Sourya Ray for useful discussions. S.H. was supported through the Research Council of Norway, Toppforsk
grant no. 250367: \emph{Pseudo-Riemannian Geometry and Polynomial Curvature Invariants:
Classification, Characterisation and Applications.}  M.O. was supported by research plan RVO: 67985840 and research grant GA\v CR 19-09659S.

\renewcommand{\thesection}{\Alph{section}}
\setcounter{section}{0}

\renewcommand{\theequation}{{\thesection}\arabic{equation}}
\setcounter{equation}{0}

\section{Riemannian geometry: universal $\Leftrightarrow$ IHS}

\label{app_universal}

Isotropy-irreducible homogeneous spaces are defined as (quoting, for example, \cite{Bleecker79}):
\begin{definition}[IHS space]
\label{def_IHS}
 An {\em isotropy-irreducible homogeneous space} (IHS) $(M,\bh)$ is a homogeneous space whose isotropy group at a point acts irreducibly on the tangent space of $M$ at that point. 
\end{definition}

Universal spaces were defined in \cite{Coleyetal08}:
\begin{definition}[Universal space]
\label{def_univ}
 A space $(M,\bh)$ is called {\em universal} if any symmetric conserved rank-2 tensor $\bT(h_{ij},\pa_k h_{ij}, \pa_k\pa_l h_{ij},\ldots)$ constructed from sums of terms involving contractions of the metric and powers of arbitrary covariant derivatives of the curvature tensor  (i.e., ``polynomially'') is proportional to $\bh$.
\end{definition}

\begin{remark}
 Both the above definitions are signature-independent, but in the present paper $(M,\bh)$ is assumed to be a Riemannian space (this has some consequences in the following). The factor of proportionality between $\bT$ and $\bh$ in definition~\ref{def_univ} is necessarily a constant since $\bT$ is conserved. 
\end{remark}

We observe that:
\begin{proposition}
\label{prop_IHS_univ}
	A Riemannian space $(M,\bh)$ is IHS if, and only if, it is universal.
\end{proposition}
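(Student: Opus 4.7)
The plan is to prove the two implications separately, with the forward one being essentially a restatement of the Schur-type argument already used inside the proof of Proposition~\ref{prop_E}, and the reverse one being where the substantive work lies.

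For IHS $\Rightarrow$ universal, I would argue as follows. Let $\bT$ be any symmetric rank-2 tensor built polynomially from $\bh$ and covariant derivatives of the curvature in the sense of Definition~\ref{def_univ}. Such a construction is natural (functorial under diffeomorphisms), so $\bT$ is invariant under every isometry of $(M,\bh)$; in particular, the isotropy group $H_p$ at any point $p\in M$ fixes $\bT_p$. Since $H_p$ acts irreducibly on $T_pM$ by hypothesis, the standard Schur argument over $\mathbb{R}$ (any $H_p$-equivariant self-adjoint endomorphism has $H_p$-invariant eigenspaces, hence must be a scalar multiple of the identity) forces $\bT_p=\lambda(p)\,\bh_p$. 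Because an IHS is homogeneous, the function $\lambda=\tfrac{1}{n}h^{ij}T_{ij}$ is itself a polynomial scalar invariant and hence constant; alternatively, constancy of $\lambda$ follows from the assumed conservation of $\bT$. Thus $\bT\propto\bh$, which is universality.

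For universal $\Rightarrow$ IHS the strategy is to reduce to a known classification. The key observation is that universality forces $\bh$ to be critical, up to a cosmological constant, for every natural polynomial Lagrangian: indeed, for any scalar $\mathcal{L}(\bh,\tilde\bR,\nabla\tilde\bR,\ldots)$ the Euler-Lagrange tensor $\bE$ is symmetric, conserved and polynomial in the required sense, so by universality $\bE\propto\bh$. Choosing $\mathcal{L}=R$ already gives that $\bh$ is Einstein; quadratic and higher-order choices then impose an infinite tower of pointwise constraints on $\tilde R_{ijkl}$ and its covariant derivatives. Bleecker's theorem~\cite{Bleecker79} is precisely what packages all such ``universally critical'' constraints into the conclusion that $M$ is homogeneous with isotropy acting irreducibly on the tangent space, i.e.\ IHS, so invoking it closes the argument.

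The main obstacle is carrying out this reverse direction cleanly rather than just invoking~\cite{Bleecker79} as a black box. My preferred path is to argue by contraposition in two cases. In case (a), $\bh$ is not even locally homogeneous, so some polynomial scalar curvature invariant $\mathcal{I}$ has nontrivial gradient; varying $\int\mathcal{I}^2\sqrt{-g}\,\d^nx$ one then manufactures a symmetric conserved polynomial rank-2 tensor that is manifestly not a multiple of $\bh$, contradicting universality. In case (b), $\bh$ is homogeneous but the isotropy representation splits as $T_pM=V_1\oplus V_2$ with both summands proper; the $H_p$-invariant orthogonal projector onto $V_1$ then defines a parallel symmetric tensor not proportional to $\bh$, and the technical point is to realize it (or a suitable modification of it) as a polynomial Euler-Lagrange tensor in order to derive the required contradiction. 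It is exactly at this last step that the input of~\cite{Bleecker79} appears genuinely indispensable.
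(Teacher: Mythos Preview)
Your forward direction and your primary reverse route are both correct and match the paper's: IHS $\Rightarrow$ universal via the Schur-type argument (which the paper attributes to Wolf~\cite{Wolf68}), and universal $\Rightarrow$ IHS by observing that universality forces $\bh$ to be critical for every natural curvature Lagrangian and then invoking Bleecker~\cite{Bleecker79}. The only refinement the paper adds is to cite Thomas's natural-tensor theorem~\cite{thomas} (together with \cite{Anderson84,Gilkey73,AtiBotPat73,Epstein75}) as the bridge between ``polynomial in curvature'' (the framing of Definition~\ref{def_univ}) and Bleecker's notion of ``semi-solo''; depending on exactly how Bleecker states his Theorem~4.4 you may or may not need this step, but you should be aware of it.

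Your alternative contraposition argument, however, has gaps beyond those you acknowledge. In case~(a), the claim that a Riemannian metric which is not locally homogeneous must admit some non-constant polynomial curvature scalar is precisely the Pr\"ufer--Tricerri--Vanhecke theorem~\cite{PruTriVan96} (which the paper invokes in a footnote for exactly this purpose), not something one can take for granted; and even granting it, the assertion that the Euler--Lagrange tensor of $\int\mathcal{I}^2$ is ``manifestly'' not a multiple of $\bh$ needs an argument you do not supply. In case~(b), the $H_p$-invariant projector onto $V_1$ extends to a $G$-invariant symmetric $2$-tensor, but it is not in general \emph{parallel} (that would require the space to be symmetric), and, as you correctly concede, realizing it as a polynomial conserved curvature tensor is exactly the non-trivial content of Bleecker's theorem --- so this alternative route ultimately collapses back onto the paper's.
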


\begin{proof} 
The fact that $IHS\Rightarrow$ universal follows from the proof of the theorem in section~15 (p.~137) of \cite{Wolf68}. Conversely, a universal space must be ``semi-solo'' (in the terminology of \cite{Bleecker79}) thanks to  the results of section~39 of \cite{thomas} (cf. also \cite{Anderson84} for further comments and \cite{Gilkey73,AtiBotPat73,Epstein75} for related results about ``natural tensors''). Then, theorem~4.4 of \cite{Bleecker79} implies that it is also IHS.\footnote{The fact that universal$\Rightarrow$ locally homogenous can also be proven in a different way. Namely, from the proof of Theorem~3.2 of \cite{HerPraPra14} it follows that a universal space is CSI, which in turn (since the signature is Riemannian) implies \cite{PruTriVan96} local homogeneity.}

\end{proof}

Some examples of IHS are given in \cite{Bessebook} (see also, e.g., \cite{OhaNoz15} for some comments in the context of Lovelock black holes). The simplest ones are direct products of (identical) spaces of constant curvature, others are given by irreducible symmetric spaces. In dimension $n=4$, an IHS must symmetric and therefore locally one of the following: $S^4$, $S^2\times S^2$, $H^4$, $H^2\times H^2$, $\mathbb{C}P^2$, $H_{\mathbb{C}}^2$, or flat space (cf., e.g., \cite{Bessebook} and references therein).

\section{Robinson-Trautman coordinates}

\label{app_RT}

\subsection{Metric and curvature}

\label{subsec_app_metric}

For any choice of a Riemannian metric $\bh$ (even a non-Einstein one), the line-element \eqref{metric1}, belongs to the $d$-dimensional Robinson-Trautman class \cite{PodOrt06} (here $d=n+2$). This can be easily seen by introducing Eddington-Finkelstein coordinates via 
\be 
	\d t=\d u+(e^{a}f)^{-1}\d v , \qquad \d r=e^{-a}\d v ,
\ee	
such that 
\be
 \bg=-2\d u\d v-2{\cal H}(v)\d u^2+ r^2(v)h_{ij}(x^k)\d x^i\d x^j , \qquad 2{\cal H}=e^af .
 \label{ansatz_RT}
\ee
Both coordinate systems \eqref{metric1} and \eqref{ansatz_RT} can be useful for different purposes. In the coordinates \eqref{ansatz_RT}, let us define the coframe\footnote{With a small abuse of notation, for simplicity we use the $n$-dimensional indices $i,j,k,\ldots$ to label both the $x^i$ coordinates of points of the transverse space and the coframe vectors relative to its metric $\bh$.}
\be
 \bm{\omega}^0=\d v+{\cal H}\d u , \qquad \bm{\omega}^1=\d u , \qquad \bm{\omega}^i=r\bm{\tilde\omega}^{\tilde i} ,
 \label{frame_RT}
\ee
where the $\bm{\tilde\omega}^{\tilde i}$ define a coframe of $\bh$. In the dual frame, one finds the following non-zero Riemann tensor components (see \cite{PodSva15} for the corresponding coordinate components)
\beq
  & & R_{0101}={\cal H}'', \qquad R_{0 i1 j}=r^{-1}(r'{\cal H})'\delta_{ij} , \qquad R_{ijkl}=r^{-2}\tilde R_{\tilde i\tilde j\tilde k\tilde l}-4r^{-2}r'^2{\cal H}\delta_{i[k}\delta_{l]j} , \\
	& & R_{0 i0 j}=-r^{-1}r''\delta_{ij} , \qquad R_{1 i1 j}={\cal H}^2R_{0 i0 j} ,
\eeq
where primes denote differentiation w.r.t. $v$. The Ricci tensor then reads 
\beq
  & & R_{01}={\cal H}''+nr^{-1}(r'{\cal H})' , \qquad R_{ij}=r^{-2}\tilde R_{\tilde i\tilde j}-2r^{-1}\delta_{ij}\left[(r'{\cal H})'+(n-1)r'^2r^{-1}{\cal H}\right] , \label{R01} \\
	& & R_{00}=-nr^{-1}r''=-nr^{-1}(e^{-a})_{,v} , \qquad R_{11}={\cal H}^2R_{00} , \label{Rvv}
\eeq
and the Ricci scalar
\be
  R=r^{-2}\tilde R-2{\cal H}''-4nr^{-1}(r'{\cal H})'-2n(n-1)r'^2r^{-2}{\cal H} .
	\label{R}
\ee
When $\bh$ is Einstein one further has $\tilde R_{\tilde i\tilde j}=\frac{\tilde R}{n}\delta_{\tilde i\tilde j}$, with $\tilde R$=const. 
Note also that $R_{00}=R_{vv}$, and $R_{00}=-(e^{a}f)^{-1}(R^t_{~t}-R^r_{~r})$ in the coordinates \eqref{metric1}.

For certain applications it is also useful to display the first and second non-vanishing covariant derivatives of $R$, namely
\be
 R_{;0}=R' , \qquad R_{;1}=-{\cal H}R' ,
 \label{DR}
\ee
and 
\be
 R_{;00}=R'' , \qquad R_{;01}=-({\cal H}R')' , \qquad R_{;11}={\cal H}^2R'' , \qquad R_{;ij}=2r^{-1}r'{\cal H}R'\delta_{ij} ,
 \label{D2R}
\ee
so that 
\be
 \Box R=2r^{-n}(r^{n}{\cal H}R')' .
 \label{BoxR}
\ee

For brevity, we will not display the first and second covariant derivatives of $R_{\mu\nu}$. Let us only observe that they do not contain the Weyl tensor of $\bh$.

\subsection{General field equations}

It follows from section~\ref{subsec_reduced} that, using the frame defined in section~\ref{subsec_app_metric} and assuming $\bh$ to be IHS, the field equations only possess the two {\em independent} components $E_{00}$ and $E_{01}-{\cal H}E_{00}$. The remaining field equations are not independent and can be expressed as $E_{11}={\cal H}^2E_{00}$ and
\be
 r^{-1}r'E_{kk}=({\cal H}E_{00}-E_{01})'+nr^{-1}r'({\cal H}E_{00}-E_{01})+{\cal H}'E_{00} ,
 \label{conservation}
\ee
where the conservation $E^{\mu\nu}_{\phantom{\mu\nu};\nu}=0$ has been used. 
(Note that $E_{ij}=\frac{1}{n}E_{kk}\delta_{ij}$, cf.~\eqref{E}). It may also be useful to observe that trace of the field equations thus reads
\be
 E^{\mu}_{\phantom{\mu}\mu}= -(n+2)E_{01}+n{\cal H}E_{00}+\frac{r}{r'}\left[({\cal H}E_{00}-E_{01})'+{\cal H}'E_{00}\right] .
 \label{trace}
\ee

\subsection{Field equations for quadratic gravity}

\label{app_RT_QG}

As an example, let us obtain the explicit form of the field equations of quadratic gravity~\eqref{E_QG} for the metric \eqref{ansatz_RT} (with $\bh$ being IHS). In the frame \eqref{frame_RT}, the two independent components $E_{00}$ and $-\frac{1}{n}(E_{01}-{\cal H}E_{00})$ (cf. above) give
\beq
& &  -\frac{n}{\kappa}r^{-1}r''-2\tilde R\left[(n-2)(\a+\g)r^{-3}r''+2(\b+3\a)r^{-4}r'^2\right] \nonumber \\
& & {}+2(2\a+\b)\left[{\cal H}''''+2nr^{-1}r'{\cal H}'''\right]+2n{\cal H}''\left\{2\left(2\b+7\a\right)r^{-1}r''+\left[2(n-5)\a+(n-4)\b)\right]r^{-2}r'^2\right\} \nonumber \\
& & {}+2n{\cal H}'\left\{3(\b+4\a)r^{-1}r'''+\left[4\left(3n-7\right)\a+3(n-3)\b\right]r^{-2}r'r''-(n-2)\left(3\b+8\a\right)r^{-3}r'^3\right\} \label{E00} \\
& & {}+2n{\cal H}\Big\{(\b+4\a)r^{-1}r''''+\left[n(\b+4\a)-4(\b+3\a)\right]r^{-2}r'r'''+\left[2n(\b+4\a)-\b-8\a\right]r^{-2}r''^2 \nonumber \\
& & {}+\left[2(n^2-11n+14)\a+(8-5n)\b+2(n-1)(n-2)\gamma\right]r^{-3}r'^2r''+4(n-1)(\b+3\a)r^{-4}r'^4\Big\}=0 , \nonumber 
\eeq
\beq
& &  \frac{1}{n\kappa}\left[n(n-1)r^{-2}r'^2{\cal H}+nr^{-1}r'{\cal H}'+\Lambda-\textstyle{\frac{1}{2}}r^{-2}\tilde R\right]-\textstyle{\frac{1}{2n}}r^{-4}\left[\g\tilde I_{GB}+\textstyle{\frac{1}{n}}\tilde R^2(\b+n\a)\right]  \nonumber \\
& &  {}-2r'''\left\{(\b+4\a)r^{-1}{\cal H}{\cal H}'+2\left[(n+1)\b+4n\a\right]r^{-2}r'{\cal H}^2\right\} \nonumber \\ 
& &  {}-4r''\left\{(\b+4\a)r^{-1}\left[-\textstyle{\frac{1}{2}}{\cal H}{\cal H}''+{\cal H}'^2\right]+\left[4(2n-1)\a+(2n+\textstyle{\frac{1}{2}})\b\right]r^{-2}r'{\cal H}{\cal H}'+(n-2)\left[(n+1)\b+4n\a\right]r^{-3}r'^2{\cal H}^2\right\} \nonumber \\
 & & {}+2r''^2\left[\b+n(\b+4\a)\right]r^{-2}{\cal H}^2-2r'\left\{\left[(\b+4\a){\cal H}{\cal H}'''+(\b+2\a){\cal H}'{\cal H}''\right]r^{-1}-(n-2)(\a+\g)\textstyle{\frac{1}{n}}\tilde Rr^{-3}{\cal H}'\right\} \label{Ecomb} \nonumber \\
& & {}+r'^2\Big\{-2\left[2n(\b+4\a)+\b\right]r^{-2}{\cal H}{\cal H}''+\left[-n(\b+4\a)+4(\b+3\a)\right]r^{-2}{\cal H}'^2  \\
& &  \qquad\qquad {}+2\left[(n-2)(n-3)\gamma+n\a(n-5)-2\b\right]\textstyle{\frac{1}{n}}\tilde Rr^{-4}{\cal H}\Big\} \nonumber \\
& & {}-2r'^3r^{-3}{\cal H}{\cal H}'\left[2(n-1)(n-2)\g+(\b+6\a)n^2-2(\b+9\a)n+2(-\b+2\a)\right] \nonumber \\
& & {}-2(n-1)r'^4r^{-4}{\cal H}^2\left[(n-2)(n-3)\g+\a n^2-n(\b+9\a)-3\b\right] \nonumber \\
& & {}+\textstyle{\frac{1}{n}}(\b+2\a)\left({\cal H}''^2-2{\cal H}'{\cal H}'''\right)=0 , \nonumber
\eeq
where $\tilde{R}=n(n-1)K$.

One can observe, in particular, that the Weyl tensor of the transverse metric $\bh$ enters only via the term $\tilde I_{GB}$ (which can be written as in footnote~\ref{footn_GB}) in \eqref{Ecomb}. For theories with $\g=0$, therefore, for any IHS $\bh$ the field equations take the same form as in the case when $\bh$ is of constant curvature. In general, considerable simplification of the field equations occurs if one assumes $r''=0$, which fixes one of the metric functions (cf. the comments in section~\ref{subsec_geom}).

For certain applications it may be useful to compute explicitly also the trace~\eqref{trace}, which gives 
\beq
 E^{\mu}_{\phantom{\mu}\mu}=\frac{1}{\kappa}\left[(n+2)\Lambda-\frac{n}{2}R\right]+\frac{1}{2}\left[4(n+1)\a+(n+2)\b\right]\Box R-\frac{1}{2}(n-2)r^{-4}\left[\gamma \tilde I_{GB}+(n\a+\b)\textstyle{\frac{1}{n}}\tilde R^2\right] \nonumber \\
 {}-2n(n-2)\Bigg\{[(n+1)\b+4n\a]r^{-2}r''^2{\cal H}^2+\frac{1}{2n}(2\a+\b){\cal H}''^2+\frac{1}{2}[(\b+8\a+4\g)n-4\g+2\b]r^{-2}r'^2{\cal H}'^2  \nonumber \\
	{}+[n(n-1)\a+(n-1)\b+(n-2)(n-3)\g]r^{-4}r'^2{\cal H}\big[(n-1)r'^2{\cal H}-\textstyle{\frac{1}{n}}\tilde R\big]   \nonumber \\
	{}+{\cal H}''\left[(\b+4\a)r^{-1}r'{\cal H}'+(\a+\g)r^{-2}\big[2(n-1)r'^2{\cal H}-\textstyle{\frac{1}{n}}\tilde R \big]\right] \qquad\qquad \\ 
	{}+[2n\a+2(n-2)\g+\b]r^{-3}r'{\cal H}'\big[2(n-1)r'^2{\cal H}-\textstyle{\frac{1}{n}}\tilde R\big] 		 \nonumber \qquad\qquad \\
	{}+r''\bigg[(\b+4\a)r^{-1}{\cal H}{\cal H}''+[(\b+8\a+4\g)n-4\g+2\b]r^{-2}r'{\cal H}{\cal H}' \nonumber \qquad\qquad \\
	 {}+[2n\a+2(n-2)\g+\b]r^{-3}{\cal H}\big[2(n-1)r'^2{\cal H}-\textstyle{\frac{1}{n}}\tilde R\big]\bigg]\Bigg\} \qquad\qquad 	 \nonumber 
 \label{trace_QG}
\eeq 
where $R$ and $\Box R$ are given in \eqref{R} and \eqref{BoxR}. Note, in particular, that \eqref{trace_QG} is of {\em second order} precisely for the class of theories defined by $4(n+1)\a+(n+2)\b=0$, as observed in more generality in \cite{Farhoudi06,NakOda09} (cf. also \cite{GulTek09,OlivaRay10_prd}). Another choice of special interest is given by the theory $(n+2)\a+\b=0=\g$ \cite{DesTek03} (see also \eqref{QG_Einst}).

\providecommand{\href}[2]{#2}\begingroup\raggedright\endgroup

%

\end{document}